\documentclass[]{interact}

\usepackage{algorithm}
\usepackage{algpseudocode} 
\usepackage{epstopdf}
\usepackage[caption=false]{subfig}
\usepackage{adjustbox}
\usepackage{rotating}
\usepackage{booktabs}
\usepackage{makecell}
\usepackage{tikz}
\usepackage{xcolor}
\usepackage{threeparttable}
\usepackage{url}
\usepackage{graphicx}
\usetikzlibrary{arrows.meta, patterns, positioning, calc}
\usepackage[authoryear,round]{natbib}
\bibpunct[, ]{(}{)}{;}{a}{,}{,}

\renewcommand\bibfont{\fontsize{10}{12}\selectfont}

\theoremstyle{plain}
\newtheorem{theorem}{Theorem}[section]

\newtheorem{proposition}[theorem]{Proposition}

\theoremstyle{definition}

\theoremstyle{remark}

\begin{document}


\title{Exact SAT and Constraint Programming for Job Shop Scheduling with Time-Varying Peak Power Constraints}

\author{
\name{Huy Tuan Nguyen\textsuperscript{a}, Duc Trung Kim Nguyen\textsuperscript{a}, Khanh To Van\textsuperscript{a}*\thanks{*Corresponding author. Email: khanhtv@vnu.edu.vn}}
\affil{\textsuperscript{a}Faculty of Information Technology, \\VNU University of Engineering and Technology, Hanoi, Vietnam}
}

\maketitle

\begin{abstract}
The Job Shop Scheduling Problem with Power Requirements (JSPPR) extends the classical job shop scheduling problem by imposing time-varying limits on instantaneous power consumption. Previous studies have used a mixed-integer linear programming formulation and the GRASP $\times$ ELS metaheuristic, but no SAT-based exact approach or constraint programming model has been reported. This paper develops the first exact SAT and constraint programming (CP) formulations for the JSPPR. On the 35 published benchmark instances, both SAT and CP prove global optimality for all instances and obtain identical optimal makespans, substantially improving upon the best previously reported results. They also establish four improved makespan values over the GRASP $\times$ ELS results reported in the original study. CP proves optimality faster than SAT, while both exact approaches substantially improve the optimality coverage of the MILP formulations, which prove optimality on only 6 and 10 instances using CPLEX and Gurobi, respectively. The certified optimal solutions also reveal inconsistencies in several previously reported benchmark results, including makespans below the proven optimum. We provide corrected optimal makespans and a complete set of certified optimal results for the JSPPR benchmark, establishing a reliable reference for future studies.
\end{abstract}

\begin{keywords}
Job shop scheduling; energy-efficient manufacturing; peak power constraint; exact approach; SAT solving; constraint programming
\end{keywords}

\section{Introduction}
\label{sec:introduction}
Manufacturing scheduling plays a central role in improving productivity, resource utilization, and production efficiency. The Job Shop Scheduling Problem (JSSP) is one of the most extensively studied scheduling problems because of its practical relevance and computational complexity. In modern manufacturing, however, scheduling decisions increasingly need to account for energy-related operational constraints in addition to conventional precedence and machine capacity requirements. Increasing energy costs, environmental concerns, and the integration of renewable energy sources have encouraged manufacturers to incorporate energy-related considerations into production scheduling \citep{duflou2012towards}. Beyond minimizing total energy consumption or electricity cost, many industrial facilities must operate under contractual or physical limits on instantaneous electrical power consumption. Exceeding these limits may result in financial penalties, equipment overload, or grid instability \citep{kemmoe2017job}. Consequently, production schedules must satisfy not only traditional precedence and machine capacity constraints but also ensure that the total power demand remains below the available power threshold at any point in time.

The Job Shop Scheduling Problem with Power Requirements (JSPPR), introduced by \citet{kemmoe2017job}, extends the classical JSSP by associating power requirements with operations and imposing a time-dependent upper bound on total instantaneous power consumption. These constraints introduce additional coupling among operations: operations that are independent in the classical JSSP may become mutually constrained when executed simultaneously because of their combined power demand. The resulting problem seeks a minimum makespan schedule satisfying precedence, machine capacity, and time-varying power constraints.

To date, research on the JSPPR has been limited to the MILP formulation and the GRASP $\times$ ELS metaheuristic of \citet{kemmoe2017job}. The MILP formulation was able to prove optimality for only a limited number of benchmark instances, while exhibiting relatively large optimality gaps compared with the high-quality solutions obtained by GRASP $\times$ ELS. In contrast, SAT and CP have proven effective as exact optimization paradigms for a wide range of combinatorial scheduling problems. Nevertheless, to the best of our knowledge, the JSPPR has not previously been investigated using either SAT or CP.

This paper addresses this gap by developing exact SAT and CP approaches for the JSPPR. We introduce the first SAT-based approach, using an order-based encoding of operation start times and a pseudo-Boolean encoding of the instantaneous power constraints. We also develop a CP formulation by reformulating the MILP model of \citet{kemmoe2017job} using interval variables and global constraints. 
We evaluate both approaches on the complete set of 35 published JSPPR benchmark instances and provide a systematic exact assessment of their computational performance and the existing benchmark results.

The main contributions of this paper are summarized as follows:
\begin{itemize}
\item First, we develop exact SAT and CP approaches for the JSPPR, introducing the first SAT-based approach with an order-based encoding and PB constraints, and a CP formulation using interval variables and global constraints.

\item Second, experiments on the 35 published JSPPR instances show that both SAT and CP prove global optimality for all instances and establish four new bounds beyond those reported by the GRASP $\times$ ELS method of \citet{kemmoe2017job}. CP is substantially faster at proving optimality, while SAT substantially improves the optimality coverage of the reimplemented MILP formulations.

\item Third, the certified optimal solutions reveal inconsistencies in several previously reported benchmark results. We identify these discrepancies and provide corrected optimal makespans, establishing a reliable reference for future computational studies of the JSPPR.

\end{itemize}

The remainder of this paper is organized as follows. Section \ref{sec:jsppr} presents the JSPPR and reviews related work. Section \ref{sec:preprocessing} describes the preprocessing procedures, while Sections \ref{sec:cp-formulation} and \ref{sec:sat-formulation} present the CP and SAT formulations, respectively. Section \ref{sec:computational-experiments} reports and discusses the computational results. Finally, Section \ref{sec:conclusion} concludes the paper and outlines future research directions.

\section{JSPPR: Problem Definition and Related Work}
\label{sec:jsppr}
\subsection{Problem Definition}
\label{sec:problem-definition}
The Job Shop Scheduling Problem with Power Requirements (JSPPR) extends the classical Job Shop Scheduling Problem (JSSP) by associating each operation with a power requirement while imposing a limit on the total instantaneous power consumption. The objective is to determine a feasible schedule that minimizes the makespan while satisfying precedence, machine capacity, and power constraints.

The manufacturing system is modeled under the following assumptions. The set of jobs and their processing routes are known before scheduling begins. Each job consists of an ordered sequence of $m$ operations, visiting each machine in $\mathcal{M}$ exactly once in a predetermined order. Each operation must be processed on a predetermined machine for a fixed processing time. Operations are non-preemptive; once an operation starts processing, it must continue until completion without interruption. Each machine can process at most one operation at any time, and each operation can be processed by only one machine simultaneously. Machine failures are not considered during the scheduling horizon. The available power threshold is predetermined by the power supplier and cannot be modified. Following the problem definition of \citet{kemmoe2017job}, the power profile of each operation is represented by two consecutive sub-operations without delay: an initial peak-consumption phase followed by a nominal-consumption phase.

Let $\mathcal{J}=\{0,\ldots,n-1\}$ denote the set of jobs and $\mathcal{M}=\{M_0,\ldots,M_{m-1}\}$ denote the set of machines. Each job $j \in \mathcal{J}$ consists of an ordered sequence of $m$ operations $(op_{j,0},\ldots,op_{j,m-1})$, where operation $op_{j,i}$ is processed on machine $m_{j,i}$ for a duration $P_{j,i}$. The operation enters its peak-power phase immediately when its execution begins, and this phase lasts for $a_{j,i}$ time units, where $a_{j,i}\le P_{j,i}$. The remaining $P_{j,i}-a_{j,i}$ time units constitute the nominal-power phase. Throughout its execution, operation $op_{j,i}$ consumes a nominal power of $W_{j,i}$. During the peak-power phase, its total consumption reaches the peak level $WS_{j,i}$, requiring an additional power draw of $W'_{j,i} = WS_{j,i} - W_{j,i}$. The maximum available power at time $t$ is denoted by the time-dependent power threshold $PT_t$. The notation used throughout this paper is summarized in Table~\ref{tab:notation}.

\begin{table}[htbp]
\centering
\caption{Notation used in this paper}
\label{tab:notation}
\begin{tabular}{ll}
\toprule
Symbol & Meaning \\
\midrule
$\mathcal{J}$ & Set of jobs \\
$\mathcal{M}$ & Set of machines \\
$n$ & Total number of jobs \\
$m$ & Total number of machines \\
$op_{j,i}$ & Operation $i$ of job $j$  \\
$P_{j,i}$ & Processing time of operation $op_{j,i}$ \\
$a_{j,i}$ & Peak-phase duration of operation $op_{j,i}$ \\
$W_{j,i}$ & Nominal power consumption of operation $op_{j,i}$ \\
$W'_{j,i}$ & Additional power draw of operation $op_{j,i}$ during peak phase \\
$PT_t$ & Available power threshold at time instant $t$ \\
\bottomrule
\end{tabular}
\end{table}

A feasible schedule must satisfy several constraints. First, precedence constraints ensure that the operations of each job are executed in their prescribed order. Specifically, for every job $j \in \mathcal{J}$ and every $i=0,\ldots,m-2$, operation $op_{j,i+1}$ cannot begin before $op_{j,i}$ has completed. Second, machine capacity constraints prevent two operations assigned to the same machine from overlapping in time. Third, the peak-power and nominal-power phases of each operation must proceed consecutively without interruption. Finally, for every time instant $t$, the total power consumed by all simultaneously executing operations, including both nominal power consumption $W_{j,i}$ and any additional power draw $W'_{j,i}$, must not exceed the available power threshold $PT_t$. The objective is to determine a feasible schedule that minimizes the makespan, denoted by $C_{\max}$, while satisfying all precedence, machine capacity, continuity, and power constraints. 

To illustrate the mathematical formulation and the behavior of the power constraints, we consider a small example comprising two jobs and two machines ($n = 2, m = 2$), adapted from \citet{kemmoe2017job}. Let $\mathcal{J} = \{0, 1\}$ and $\mathcal{M} = \{M_0, M_1\}$. Each job consists of two sequentially ordered operations, each consisting of a peak-power phase followed by a nominal-power phase. The routing, processing durations, and power profile parameters for each operation are summarized in Table~\ref{tab:jsppr_instance}.

\begin{table}[htbp]
\centering
\caption{Operational and Power Parameters of the Illustrative $2 \times 2$ JSPPR Instance}
\label{tab:jsppr_instance}
\begin{adjustbox}{width=\linewidth}
\begin{tabular}{ccccccc}
\hline
\textbf{Job} & \textbf{Operation} & \textbf{Machine} & \textbf{Total Dur.} & \textbf{Peak Dur.} & \textbf{Peak Power} & \textbf{Nominal Power} \\
$j$ & $op_{j,i}$ & $m_{j,i}$ & $P_{j,i}$ & $a_{j,i}$ & $WS_{j,i}$ (kW) & $W_{j,i}$ (kW) \\ \hline
$0$ & $op_{0,0}$ & $M_0$ & 89 & 8  & 38 & 17 \\
    & $op_{0,1}$ & $M_1$ & 35 & 10 & 37 & 21 \\ \hline
$1$ & $op_{1,0}$ & $M_1$ & 30 & 17 & 44 & 20 \\
    & $op_{1,1}$ & $M_0$ & 54 & 18 & 47 & 15 \\ \hline
\end{tabular}
\end{adjustbox}
\end{table}

The available power threshold imposed by the supplier varies dynamically over the scheduling horizon. This time-dependent power threshold $PT_t$ is defined piecewise as follows:

\begin{equation*}
PT_t = 
\begin{cases}
84\text{ kW}, & 0 \le t < 44, \\
40\text{ kW}, & 44 \le t < 89, \\
5\text{ kW},  & 89 \le t < 109, \\
15\text{ kW}, & 109 \le t < 133, \\
56\text{ kW}, & 133 \le t < 196, \\
0\text{ kW},  & 196 \le t < 220, \\
75\text{ kW}, & t \ge 220.
\end{cases}
\end{equation*}

The resulting schedule and its corresponding instantaneous power consumption profile are illustrated in Figure~\ref{fig:jsppr_gantt_power}. As illustrated, operations $op_{0,0}$ on $M_0$ and $op_{1,0}$ on $M_1$ start concurrently at $t = 0$, requiring a combined initial peak power of $82\text{ kW}$ which respects the threshold of $PT_t = 84\text{ kW}$. Operation $op_{1,0}$ finishes at $t = 30$, while $op_{0,0}$ drops to its nominal level at $t = 8$ and completes at $t = 89$. Although machine $M_0$ becomes available at $t = 89$, $op_{1,1}$ cannot start due to reduced power thresholds ($5\text{ kW}$ and $15\text{ kW}$), resulting in an unavoidable idle window across $[89, 133)$.

Once the threshold increases to $56\text{ kW}$ at $t = 133$, $op_{1,1}$ begins processing on $M_0$. When its peak phase finishes at $t = 151$, sufficient capacity is released to concurrently start $op_{0,1}$ on $M_1$ without exceeding $PT_t$. Operations $op_{0,1}$ and $op_{1,1}$ finish at $t = 186$ and $t = 187$, respectively, concluding all processing prior to the maintenance blackout at $t = 196$ and achieving the optimal makespan $C_{\max} = 187$.

\begin{figure}[t]
\centering
\begin{tikzpicture}[
    x=0.055cm, y=0.05cm,
    font=\sffamily\footnotesize,
    >=Stealth
]
    \definecolor{job0peak}{RGB}{31, 119, 180}
    \definecolor{job0nom}{RGB}{174, 199, 232}
    \definecolor{job1peak}{RGB}{255, 127, 14}
    \definecolor{job1nom}{RGB}{255, 187, 120}
    \definecolor{powerfill}{RGB}{220, 235, 245}
    \definecolor{powerline}{RGB}{24, 84, 156}
    \definecolor{threshcolor}{RGB}{200, 30, 30}
    \definecolor{gridgray}{RGB}{235, 235, 235}

    \foreach \t in {0, 44, 89, 109, 133, 151, 187, 196, 220} {
        \draw[gridgray, dashed, line width=0.4pt] (\t, 0) -- (\t, 160);
    }

    \draw[->, thick] (0, 115) -- (235, 115) node[right] {$t$};
    \node[left, font=\bfseries] at (-5, 145) {$M_0$};
    \node[left, font=\bfseries] at (-5, 125) {$M_1$};

    \foreach \t in {0, 8, 30, 89, 133, 151, 187} {
        \draw[thick] (\t, 113) -- (\t, 117);
        \node[below, font=\tiny] at (\t, 113) {\t};
    }

    \fill[job0peak] (0, 140) rectangle (8, 150);
    \fill[job0nom]  (8, 140) rectangle (89, 150);
    \draw[thick, black] (0, 140) rectangle (89, 150);
    \draw[dashed, black] (8, 140) -- (8, 150);
    \node[font=\scriptsize] at (44.5, 145) {$op_{0,0}$};

    \fill[job1peak] (133, 140) rectangle (151, 150);
    \fill[job1nom]  (151, 140) rectangle (187, 150);
    \draw[thick, black] (133, 140) rectangle (187, 150);
    \draw[dashed, black] (151, 140) -- (151, 150);
    \node[font=\scriptsize] at (160, 145) {$op_{1,1}$};

    \fill[job1peak] (0, 120) rectangle (17, 130);
    \fill[job1nom]  (17, 120) rectangle (30, 130);
    \draw[thick, black] (0, 120) rectangle (30, 130);
    \draw[dashed, black] (17, 120) -- (17, 130);
    \node[font=\scriptsize] at (15, 125) {$op_{1,0}$};

    \fill[job0peak] (151, 120) rectangle (161, 130);
    \fill[job0nom]  (161, 120) rectangle (186, 130);
    \draw[thick, black] (151, 120) rectangle (186, 130);
    \draw[dashed, black] (161, 120) -- (161, 130);
    \node[font=\scriptsize] at (168.5, 125) {$op_{0,1}$};

    \draw[red!80!black, thick, -{Stealth}] (187, 156) -- (187, 115) 
        node[pos=-0.25, font=\bfseries\small] {$C_{\max} = 187$};

    \draw[gray!40, line width=0.6pt] (0, 103) -- (235, 103);

    \draw[->, thick] (0, 0) -- (235, 0) node[right] {Time $t$};
    \draw[->, thick] (0, 0) -- (0, 95) node[above] {Power (kW)};

    \foreach \p in {20, 40, 60, 80} {
        \draw (0, \p) -- (-3, \p) node[left, font=\scriptsize] {\p};
        \draw[gridgray, line width=0.4pt] (0, \p) -- (230, \p);
    }

    \foreach \t in {0, 44, 89, 109, 133, 151, 187, 196, 220} {
        \draw[thick] (\t, -2) -- (\t, 2);
        \node[below, font=\scriptsize] at (\t, -2) {\t};
    }

    \fill[powerfill] 
        (0,0) -- 
        (0,82) -- (8,82) -- (8,61) -- (17,61) -- (17,37) -- (30,37) -- (30,17) -- (89,17) -- (89,0) --
        (133,0) -- (133,47) -- (151,47) -- (151,52) -- (161,52) -- (161,36) -- (186,36) -- (186,15) -- (187,15) -- (187,0) -- cycle;

    \draw[powerline, thick] 
        (0,82) -- (8,82) -- (8,61) -- (17,61) -- (17,37) -- (30,37) -- (30,17) -- (89,17) -- (89,0) 
        (89,0) -- (133,0) -- (133,47) -- (151,47) -- (151,52) -- (161,52) -- (161,36) -- (186,36) -- (186,15) -- (187,15) -- (187,0) -- (230,0);

    \draw[threshcolor, ultra thick, dashed] 
        (0,84) -- (44,84) 
        -- (44,40) -- (89,40) 
        -- (89,5) -- (109,5) 
        -- (109,15) -- (133,15) 
        -- (133,56) -- (196,56) 
        -- (196,0) -- (220,0) 
        -- (220,75) -- (230,75);

    \node[threshcolor, above, font=\scriptsize\bfseries] at (22, 84) {$PT_t = 84\text{ kW}$};
    \node[threshcolor, above, font=\scriptsize\bfseries] at (66, 40) {$40\text{ kW}$};
    \node[threshcolor, above, font=\scriptsize\bfseries] at (99, 5) {$5$};
    \node[threshcolor, above, font=\scriptsize\bfseries] at (121, 15) {$15$};
    \node[threshcolor, above, font=\scriptsize\bfseries] at (165, 56) {$PT_t = 56\text{ kW}$};
    \node[threshcolor, above, font=\tiny\bfseries, align=center] at (208, 0) {Blackout\\(0 kW)};
    \node[threshcolor, above, font=\scriptsize\bfseries] at (226, 75) {$75$};

    \draw[<->, >=Latex, thick, gray!60!black] (89, 28) -- (133, 28);
    \node[above, font=\tiny, text=gray!30!black, align=center] at (111, 28) {Power Curtailment\\Idle Window};

    \begin{scope}[shift={(20, -22)}]
        \fill[job0peak] (0, 0) rectangle (8, 6);
        \fill[job0nom]  (8, 0) rectangle (16, 6);
        \draw[black, thin] (0, 0) rectangle (16, 6);
        \node[right, font=\scriptsize] at (18, 3) {Job $0$ (Peak / Nominal)};

        \fill[job1peak] (115, 0) rectangle (123, 6);
        \fill[job1nom]  (123, 0) rectangle (131, 6);
        \draw[black, thin] (115, 0) rectangle (131, 6);
        \node[right, font=\scriptsize] at (133, 3) {Job $1$ (Peak / Nominal)};

        \fill[powerfill] (0, -14) rectangle (16, -8);
        \draw[powerline, thick] (0, -11) -- (16, -11);
        \node[right, font=\scriptsize] at (18, -11) {Total Instantaneous Power};

        \draw[threshcolor, ultra thick, dashed] (115, -11) -- (131, -11);
        \node[right, font=\scriptsize] at (133, -11) {Power Threshold ($PT_t$)};
    \end{scope}

\end{tikzpicture}
\caption{Visualization of the optimal JSPPR schedule ($C_{\max} = 187$) and the corresponding power consumption profile under the time-varying power threshold $PT_t$ (dashed red line).}
\label{fig:jsppr_gantt_power}
\end{figure}

\subsection{Related Work}
\label{sec:related-work}

The Job Shop Scheduling Problem (JSSP) \citep{graham1966bounds} is a classical NP-hard \citep{lenstra1979computational} scheduling problem with numerous applications in manufacturing. Over the past decades, a wide range of solution methods have been proposed, including constructive heuristics and metaheuristics \citep{colorni1994ant, wisittipanich2012two, mahmud2022switching}, as well as exact approaches based on Mixed Integer Linear Programming (MILP) \citep{bowman1959schedule, ku2016mixed}, Constraint Programming (CP) \citep{da2022industrial, abreu2026mixed}, and Boolean Satisfiability (SAT) \citep{koshimura2010solving, huang2018efficient}. Exact methods provide optimality guarantees, whereas heuristic and metaheuristic methods generally aim to obtain high-quality feasible schedules within reasonable computational time.

Among exact approaches, CP and SAT have demonstrated competitive performance for the classical JSSP. CP models naturally capture precedence and machine capacity constraints through interval variables and specialized global constraints such as $\operatorname*{noOverlap}$. Specifically, \citet{da2022industrial} evaluated state-of-the-art CP solvers on large-scale industrial benchmarks containing up to one million operations, demonstrating the high scalability and solving efficiency of CP formulations under heavy workloads. Furthermore, \citet{abreu2026mixed} conducted an extensive comparison between commercial and open-source CP and MILP solvers on standard benchmark sets, showing that CP models consistently outperform MILP formulations, particularly on larger instances. SAT-based approaches, in contrast, encode scheduling decisions and resource conflicts as Boolean variables and propositional clauses. \citet{koshimura2010solving} applied the SAT encoding of \citet{crawford1994experimental} to solve historically open JSSP benchmark instances, while \citet{huang2018efficient} proposed a more compact encoding by tightening the feasible scheduling horizon of each operation, thereby reducing the number of Boolean variables and clauses. These developments highlight the strong potential of both CP and SAT paradigms for exact JSSP solving and motivate their extension to scheduling problems with additional resource constraints.

Energy-aware scheduling extends the classical JSSP by incorporating either energy-related objectives or additional operational constraints. One research direction minimizes cumulative energy consumption, sometimes jointly with makespan. For example, \citet{mouzon2007operational} considered machine operating and idle states, while \citet{may2015multi} proposed a multi-objective genetic algorithm for minimizing both energy consumption and makespan in job shop scheduling. Another direction minimizes electricity cost under Time-of-Use (TOU) tariffs by shifting production loads across different pricing periods. Representative studies include the hybrid flow-shop scheduling approach of \citet{luo2013hybrid} and the integrated job shop scheduling model of \citet{sanogo2026integrated}. These studies primarily treat energy as a cumulative objective or as a time-dependent operating cost.

A different class of scheduling problems imposes explicit limits on instantaneous power consumption. Rather than minimizing cumulative energy usage, these problems constrain the total power demand of simultaneously executing operations so that it never exceeds the available power capacity. Consequently, a schedule with low total energy consumption may still be infeasible if excessive power is required at a particular time. \citet{fang2013flow} investigated a permutation flow shop scheduling problem with peak-power constraints, while \citet{liu2018scheduling} studied peak-power control for two interfering job sets processed on parallel machines. These studies highlight that instantaneous power constraints introduce a fundamentally different scheduling requirement from cumulative energy optimization.

The Job Shop Scheduling Problem with Power Requirements (JSPPR), introduced by \citet{kemmoe2017job}, belongs to this class of scheduling problems with instantaneous power constraints. In the JSPPR, each operation is associated with a power requirement, and the total instantaneous power consumption throughout the schedule must not exceed a specified threshold. To solve the problem, \citet{kemmoe2017job} proposed an MILP formulation together with a metaheuristic approach called GRASP $\times$ ELS. The authors also identified the development of a CP formulation as a direction for future research. However, to the best of our knowledge, no SAT formulation or CP model for the JSPPR has subsequently been reported. 

These observations reveal a methodological gap in the existing literature on the JSPPR: despite the availability of a mathematical programming model and a metaheuristic approach, the problem has not yet been investigated using SAT or CP as alternative exact optimization paradigms. This paper addresses this gap by introducing a SAT formulation for the JSPPR and a CP formulation derived from the existing MILP model, thereby providing alternative exact approaches for the problem and enabling a systematic comparison of different solution paradigms.

\section{Preprocessing for Exact Approaches}
\label{sec:preprocessing}
\subsection{Randomized Greedy Upper Bound Computation}
\label{subsec:upper-bound}

Establishing a finite upper bound on the makespan is essential for both SAT and CP approaches, as it defines the scheduling horizon and restricts the variable domains and search space. For the JSPPR, obtaining such a bound is non-trivial because an upper bound based on total processing time is insufficient when the power threshold $PT_t$ varies over time.

We therefore use a randomized greedy construction heuristic to obtain an instance-specific upper bound. The heuristic incrementally constructs a feasible schedule by repeatedly selecting an unscheduled job uniformly at random and assigning its next operation to the earliest start time satisfying the precedence, machine capacity, and power constraints. If no feasible start time exists, the current construction is abandoned. The procedure is repeated within a fixed time limit, and the feasible schedule with the smallest makespan is retained.

Let $UB$ denote the makespan of the best schedule found. The scheduling horizon used in the remainder of this paper is then set to
\begin{equation}
H = UB.
\end{equation}
This horizon is subsequently used in the start-time domain reduction and the SAT and CP formulations. Algorithm~\ref{alg:upper-bound} summarizes
the randomized greedy upper-bound computation.

\begin{algorithm}[htbp]
\caption{Randomized Greedy Upper Bound Generation}
\label{alg:upper-bound}
\begin{algorithmic}[1]
\scriptsize
\Require JSPPR instance $I$ with horizon $H$, time limit $T$
\Ensure Upper bound $UB$ and corresponding schedule

\State $UB \gets \infty$, $bestSchedule \gets \emptyset$

\While{elapsed time $<T$}
    \State Initialize an empty schedule
    \While{unscheduled operations remain}
        \State Randomly select a job $j$ with an unscheduled operation
        \State Let $o$ be the next unscheduled operation of $j$
        \State Determine the earliest feasible start time $t$ for $o$
        \If{no feasible start time exists within $H$}
            \State Discard the current construction
            \State \textbf{break}
        \EndIf
        \State Schedule $o$ at $t$ and update resource availability
    \EndWhile

    \If{a complete feasible schedule was constructed}
        \State $C_{\max} \gets$ makespan of the schedule
        \If{$C_{\max}<UB$}
            \State $UB \gets C_{\max}$
            \State $bestSchedule \gets$ current schedule
        \EndIf
    \EndIf
\EndWhile

\State \Return $(UB,bestSchedule)$
\end{algorithmic}
\end{algorithm}

\subsection{Start Time Domain Reduction}
\label{subsec:domain-reduction}
Prior to constructing the scheduling models, a domain reduction procedure is applied to tighten the feasible start-time domain of each operation. For each operation $op_{j,i}$, it computes an earliest feasible start time $ES_{j,i}$ and a latest feasible start time $LS_{j,i}$. The bounds are derived solely from the precedence constraints within each job and the current scheduling horizon $H$. Machine capacity and power constraints are enforced directly by the solver formulations and are therefore not considered during this preprocessing step.

Assuming all jobs are available at time $0$, the earliest start time of the initial operation is $ES_{j,0} = 0$ for all $j \in \mathcal{J}$. Because operations within each job must be executed sequentially, the earliest start time for each subsequent operation $op_{j,i}$ is defined as:
\begin{equation}
ES_{j,i} = \sum_{k=0}^{i-1} P_{j,k} \quad \forall j \in \mathcal{J}, \; \forall i \in \{1, \ldots, m-1\}.
\end{equation}

To determine the latest feasible start time with respect to a given scheduling horizon $H$, let $rem_{j,i}$ denote the remaining processing time required from operation $op_{j,i}$ to the completion of job $j$:
\begin{equation}
rem_{j,i} = \sum_{k=i}^{m-1} P_{j,k} \quad \forall j \in \mathcal{J}, \; \forall i \in \{0, \ldots, m-1\}.
\end{equation}
The latest feasible start time of operation $op_{j,i}$ is then bounded by:
\begin{equation}
LS_{j,i} = H - rem_{j,i} \quad \forall j \in \mathcal{J}, \; \forall i \in \{0, \ldots, m-1\}.
\end{equation}

Thus, each operation is assigned the reduced start-time domain $[ES_{j,i},LS_{j,i}]$.
If $ES_{j,i}>LS_{j,i}$ for any operation, the instance is identified as infeasible under the current scheduling horizon, and model construction is skipped. Otherwise, the resulting domains are used to restrict the start-time variables in both the CP and SAT formulations.

\section{Constraint Programming Formulation}
\label{sec:cp-formulation}
\setcounter{equation}{0}
\renewcommand{\theequation}{CP-\arabic{equation}}

The CP formulation presented in this section is derived from the MILP formulation
proposed by \citet{kemmoe2017job} by reformulating its scheduling relationships
and resource constraints using CP interval variables and global constraints.
This reformulation preserves the underlying problem structure while exploiting
CP-specific constructs to represent temporal and resource interactions more
naturally. In particular, each operation is represented by a mandatory master
interval, its power profile is decomposed into phase intervals, and the resulting
intervals are coupled through job-precedence, machine capacity, and power resource
constraints. The formulation then minimizes the makespan of the resulting
feasible schedule.

Each operation $op_{j,i}$ is modeled as a mandatory master interval variable
$I_{j,i}$ covering its entire execution, with fixed length $P_{j,i}$ and a start-time
domain restricted to $[ES_{j,i}, LS_{j,i}]$ obtained during preprocessing:
\begin{equation}
\begin{aligned}
& ES_{j,i} \le \operatorname{startOf}(I_{j,i}) \le LS_{j,i}, \\
& \operatorname{length}(I_{j,i}) = P_{j,i}
\end{aligned}
\qquad \forall j \in \mathcal{J},\; \forall i \in \{0, \dots, m-1\}.
\end{equation}

The master interval also determines the power consumption of the operation through a sequence of phases. Let $\Phi_{j,i} \subseteq \{\mathsf{p}, \mathsf{n}\}$ denote the set of phases present in operation $op_{j,i}$, where $\mathsf{p}$ represents the peak-consumption phase and $\mathsf{n}$ represents the nominal-consumption phase. A phase with zero duration is omitted from $\Phi_{j,i}$. For the peak phase $\mathsf{p}$, the duration is $d_{j,i,\mathsf{p}} = a_{j,i}$, the start offset is $o_{j,i,\mathsf{p}} = 0$, and the power demand is $r_{j,i,\mathsf{p}} = W_{j,i} + W'_{j,i}$. For the nominal phase $\mathsf{n}$, the duration is $d_{j,i,\mathsf{n}} = P_{j,i} - a_{j,i}$, the start offset is $o_{j,i,\mathsf{n}} = a_{j,i}$, and the power demand is $r_{j,i,\mathsf{n}} = W_{j,i}$. Each phase $\phi \in \Phi_{j,i}$ is represented by a mandatory interval variable $I_{j,i,\phi}$ with length $d_{j,i,\phi}$, positioned relative to the master interval by:
\begin{equation}
\begin{aligned}
& \operatorname{startOf}(I_{j,i,\phi}) = \operatorname{startOf}(I_{j,i}) + o_{j,i,\phi}, \\
& \operatorname{length}(I_{j,i,\phi}) = d_{j,i,\phi}
\end{aligned}
\qquad \forall j \in \mathcal{J},\; \forall i \in \{0, \dots, m-1\},\; \forall \phi \in \Phi_{j,i}.
\end{equation}
Since the master interval represents the complete execution of each operation, it directly expresses both precedence within each job and non-overlap among operations sharing a machine. For each job $j \in \mathcal{J}$, the operations must be processed in their prescribed order. The completion of an earlier operation must not occur after the start of any subsequent operation:
\begin{equation}
\operatorname{endOf}(I_{j,i}) \le \operatorname{startOf}(I_{j,k})
\qquad
\forall j \in \mathcal{J},\quad 0 \le i < k \le m-1.
\end{equation}
The pairwise form above also introduces precedence relations between non-consecutive operations. These relations are redundant in terms of the feasible solution set, since they are implied by the precedence constraints between consecutive operations. Nevertheless, explicitly posting these redundant constraints can improve propagation by directly linking the temporal bounds of non-consecutive operations, potentially reducing the search space and improving solving performance.

In addition to job precedence, operations assigned to the same machine cannot overlap in time. For each machine $M_q \in \mathcal{M}$, let $\mathcal{O}_q = \{(j,i) : m_{j,i} = M_q\}$ denote the set of operations processed on that machine. This requirement is enforced directly using the global no-overlap constraint:
\begin{equation}
\operatorname{noOverlap}
\left(
\{I_{j,i} : (j,i) \in \mathcal{O}_q\}
\right)
\qquad
\forall M_q \in \mathcal{M}.
\end{equation}

The maximum power consumption varies over the scheduling horizon. The power threshold profile is piecewise constant; let $0 = \tau_0 < \tau_1 < \dots < \tau_R = H$ denote its breakpoints. For each segment $[\tau_q, \tau_{q+1})$, let $\overline{PT}_q$ denote the maximum power threshold allowed throughout that segment.

Since the standard cumulative constraint provided by CP solvers assumes a single constant limit, it cannot directly represent the time-varying threshold $PT_t$. To address this limitation, we transform the time-varying threshold into a standard constant-limit cumulative constraint using fixed dummy intervals. Let the overall power threshold be set to the peak threshold across the entire horizon:
\begin{equation}
\overline{PT} = \max_{0 \le q < R} \overline{PT}_q.
\end{equation}

For each segment $q \in \{0, \dots, R-1\}$, we define a fixed interval $D_q = [\tau_q, \tau_{q+1})$ with a constant power consumption of $\overline{PT} - \overline{PT}_q$. Thus, during segment $q$, $D_q$ reserves the unusable power, leaving exactly $\overline{PT}_q$ available for the operations. The time-dependent power constraint is therefore enforced via a single cumulative constraint over the unified set of operational phases and dummy intervals:
\begin{equation}
\begin{aligned}
\operatorname{cumulative}\Bigl(
& \{(I_{j,i,\phi}, r_{j,i,\phi}) : j \in \mathcal{J},\, 0 \le i < m,\, \phi \in \Phi_{j,i}\} \\
& \cup \{(D_q, \overline{PT} - \overline{PT}_q) : 0 \le q < R\}, \\
& \overline{PT}
\Bigr).
\end{aligned}
\end{equation}
This ensures that the total power consumption of all active phases at any time $t$ strictly satisfies the original limit $PT_t$.

Finally, the makespan is defined as the completion time of the last operation:
\begin{equation}
C_{\max} = \max_{j \in \mathcal{J},\ 0 \le i < m} \operatorname{endOf}(I_{j,i}).
\end{equation}

The resulting CP formulation is
\begin{equation*}
\begin{aligned}
\min \quad & C_{\max} \\
\text{s.t.}\quad & \text{(CP-1)--(CP-7)}.
\end{aligned}
\end{equation*}

\section{SAT Formulation}
\label{sec:sat-formulation}
\setcounter{equation}{0}
\renewcommand{\theequation}{SAT-\arabic{equation}}

The proposed SAT formulation encodes the JSPPR as a discrete-time Boolean
satisfiability problem over the scheduling horizon $H$. Each operation is
represented by Boolean start-time variables, from which its execution phases
are derived. Based on this representation, job precedence, machine capacity,
and time-dependent power constraints are encoded as Boolean or pseudo-Boolean
constraints.

The formulation consists of two stages. First, a base encoding represents all
feasible schedules within the horizon $H$, with domain reduction incorporated
by restricting the feasible start times of each operation. Second, an
incremental optimization procedure imposes increasingly tighter makespan
bounds on the same base encoding until optimality is established.

\subsection{Encoding Variables and Constraints}
\label{subsec:sat-base-encoding}

The base SAT encoding is constructed over the scheduling horizon $H$ and
incorporates the reduced start-time domains obtained during preprocessing.
For each operation $o\in\mathcal{O}$, let $[ES_o,LS_o]$ denote its feasible
start-time domain. Hence, Boolean variables are introduced only for start
times that remain possible after domain reduction. For notational convenience,
we use $o=(j,i)$ for the $i$th operation of job $j$ and define
$p_o=P_{j,i}$, $a_o=a_{j,i}$, $W_o=W_{j,i}$, $W'_o=W'_{j,i}$, and
$m_o=m_{j,i}$. The scheduling instants are
$\mathcal{T}=\{0,\ldots,H-1\}$.

The temporal position of each operation is encoded using two Boolean variables: the order variable $X_{o,t}$, indicating whether operation $o$ starts at or after time $t$, and the exact-start variable $S_{o,t}$, indicating whether $o$ starts precisely at time $t$. Consequently, $X$ forms a cumulative representation of the start time, while $S$ pinpoints the exact transition. Leveraging this cumulative structure, domain reduction is straightforwardly achieved by fixing the order variables at the boundaries of the feasible start-time window:
\begin{equation}
    X_{o,ES_o}=\mathrm{true},
    \qquad
    X_{o,LS_o+1}=\mathrm{false},
    \qquad
    \forall o\in\mathcal{O}.
    \label{eq:sat-order-boundary}
\end{equation}
The order variables are monotone:
\begin{equation}
    \neg X_{o,t+1}\lor X_{o,t},
    \qquad
    \forall o\in\mathcal{O},\;
    t\in\{ES_o,\ldots,LS_o\}.
    \label{eq:sat-order-monotonicity}
\end{equation}
The exact-start variable is linked to the unique true-to-false transition of the order-variable sequence:
\begin{equation}
\begin{aligned}
    &\neg S_{o,t} \lor X_{o,t}, \\
    &\neg S_{o,t} \lor \neg X_{o,t+1}, \\
    &\neg X_{o,t} \lor X_{o,t+1} \lor S_{o,t},
\end{aligned}
\qquad
\forall o \in \mathcal{O},\;
t \in \{ES_o, \dots, LS_o\}.
\label{eq:sat-exact-start}
\end{equation}
Together with the boundary and monotonicity conditions, these clauses ensure that every operation is assigned exactly one start time in its reduced domain.

Once the start time of every operation has been established, temporal feasibility can be expressed directly from the selected start times. Precedence within each job is enforced between consecutive operations: if $(j,i)$ starts at time $t$, its successor $(j,i+1)$ must start no earlier than $t+p_{j,i}$:
\begin{equation}
\neg S_{(j,i),t} \lor X_{(j,i+1),\,t+p_{j,i}},
\qquad
\begin{aligned}
    & \forall j \in \mathcal{J},\; i \in \{0, \dots, m-2\}, \\
    & \forall t \in \{ES_{j,i}, \dots, LS_{j,i}\}.
\end{aligned}
\label{eq:sat-precedence}
\end{equation}
An order variable whose time index falls outside the representable horizon is interpreted as false. Hence, any start time that would require the successor to start beyond the horizon is ruled out by the precedence constraint.

Operations assigned to the same machine must also be non-overlapping. For any pair of distinct operations $\{o, o'\}$ with $m_o = m_{o'}$, if $o$ starts at time $t$, then $o'$ must either start after $o$ completes or finish before $o$ begins. In the cumulative representation, this mutual exclusion is encoded symmetrically for both directed pairs $(o, o')$ and $(o', o)$ across all feasible start times:
\begin{equation}
\neg S_{o,t} \lor X_{o',\,t+p_o} \lor \neg X_{o',\,t-p_{o'}+1},
\qquad
\begin{aligned}
    & \forall o,o' \in \mathcal{O},\; o \neq o',\; m_o = m_{o'}, \\
    & \forall t \in \{ES_o, \dots, LS_o\}.
\end{aligned}
\label{eq:sat-machine-no-overlap}
\end{equation}
Here, out-of-bound order variables are treated as fixed constants: $X_{o',t'}$ evaluates to $\text{true}$ for $t' \le ES_{o'}$ and $\text{false}$ for $t' > LS_{o'}$. By enforcing this clause symmetrically for every feasible start time, the encoding strictly forbids $o'$ from starting within the overlapping window $[t - p_{o'} + 1,\, t + p_o - 1]$, thereby eliminating any conflicting resource usage on the shared machine.

The start-time variables also determine the phase in which each operation consumes power. Two additional Boolean variables are introduced for this purpose: $P_{o,t}$ is true if operation $o$ is in its peak-consumption phase at time $t$, and $B_{o,t}$ is true if it is in its nominal-consumption phase. If operation $o$ starts at $s$, its peak phase occupies the first $a_o$ time instants and its nominal phase occupies the remaining $p_o-a_o$ instants:
\begin{equation}
\begin{aligned}
    &\neg S_{o,s} \lor P_{o,s+\delta},
    \quad && \forall \delta \in \{0, \dots, a_o-1\},\\
    &\neg S_{o,s} \lor B_{o,s+\delta},
    \quad && \forall \delta \in \{a_o, \dots, p_o-1\},
\end{aligned}
\qquad
\forall o \in \mathcal{O},\;
s \in \{ES_o, \dots, LS_o\}.
\label{eq:sat-phase-activation}
\end{equation}
No clause is generated for a phase of zero duration, and phase variables outside $\mathcal{T}$ are interpreted as false.

The phase variables provide the representation required to enforce the time-dependent power limit. At each time instant $t$, only operations whose corresponding phase may possibly cover $t$ need to participate in the power constraint. Define the candidate sets for each $t\in\mathcal{T}$:
\begin{equation}
\begin{aligned}
    \mathcal{O}^{\mathsf{p}}_t
    &=
    \left\{
        o\in\mathcal{O}:
        a_o>0,\;
        ES_o\le t\le LS_o+a_o-1,\;
        W_o+W'_o>0
    \right\},\\
    \mathcal{O}^{\mathsf{n}}_t
    &=
    \left\{
        o\in\mathcal{O}:
        p_o>a_o,\;
        ES_o+a_o\le t\le LS_o+p_o-1,\;
        W_o>0
    \right\}.
\end{aligned}
\label{eq:sat-power-candidate-sets}
\end{equation}
The instantaneous power demand is then obtained by summing the contributions of all active peak and nominal phases:
\begin{equation}
    \sum_{o\in\mathcal{O}^{\mathsf{p}}_t}
        (W_o+W'_o)P_{o,t}
    +
    \sum_{o\in\mathcal{O}^{\mathsf{n}}_t}
        W_oB_{o,t}
    \le PT_t,
    \qquad
    \forall t\in\mathcal{T}.
    \label{eq:sat-power-pb}
\end{equation}
This constraint is a pseudo-Boolean (PB) constraint, as its left-hand side is a weighted sum of Boolean variables. To represent these constraints in SAT, we employ the Binary Merger encoding \citep{manthey2014more}, an efficient encoding for PB constraints that converts the weighted-sum constraint into an equivalent CNF representation.

Consequently, the base encoding jointly specifies a feasible start time for
every operation, derives the corresponding execution phases, and enforces
precedence, machine non-overlap, and power feasibility over the scheduling
horizon. The resulting formula therefore characterizes the feasible-schedule
space, while optimization of the makespan is handled separately by the
incremental procedure described in the next subsection.

\subsection{Incremental Optimization}
\label{subsec:sat-optimization}

The base encoding characterizes feasible schedules within the horizon $H$, but does not minimize the makespan. Optimization is therefore performed incrementally by exploiting the stateful solving capabilities of modern CDCL SAT solvers \citep{biere2009handbook}. Rather than recreating SAT formulas and invoking independent solver instances from scratch, a single persistent solver instance maintains the base encoding $\mathcal{F}_{\mathrm{base}}$ alongside all learned conflict clauses and heuristic scores across successive calls.

For a makespan threshold $\widehat{C}_{\max}$, let $o_j^{\mathrm{last}}$ and $p_j^{\mathrm{last}}$ denote the last operation of job $j$ and its processing time, respectively. Because $X_{o,t}$ is true if and only if operation $o$ starts at or after time $t$, the condition $C_{\max}\leq\widehat{C}_{\max}$ is enforced by the conjunction of unit clauses
\begin{equation}
    \mathcal{A}(\widehat{C}_{\max})
    =
    \bigwedge_{j\in\mathcal{J}}
    \neg X_{o_j^{\mathrm{last}},
    \widehat{C}_{\max}-p_j^{\mathrm{last}}+1}.
    \label{eq:sat-makespan-assumption}
\end{equation}

The incremental solving loop is initialized by loading $\mathcal{F}_{\mathrm{base}}$ and the initial makespan bound $\mathcal{A}(H)$ into an active solver instance. In each iteration, the solver evaluates satisfiability while preserving its internal conflict graph. When a satisfiable assignment is found, it is decoded into a feasible schedule with makespan $C_{\max}$. The bound is then tightened to $C_{\max}-1$, and the new unit clauses $\mathcal{A}(C_{\max}-1)$ are appended directly to the current solver state without resetting the solver. Because makespan bounds are monotonically decreasing, previously added clauses remain logically valid. The optimization terminates when the solver returns unsatisfiable, proving that no feasible schedule with a strictly smaller makespan exists. If unsatisfiability occurs during the first check at $\widehat{C}_{\max}=H$, the instance is proven infeasible.

The incremental optimization procedure is outlined in Algorithm~\ref{alg:incremental-sat}.

\begin{algorithm}[htbp]
\caption{Incremental SAT Optimization}
\label{alg:incremental-sat}
\begin{algorithmic}[1]
\Require JSPPR instance with horizon $H$
\Ensure An optimal schedule or an infeasibility result
\State $\mathcal{S} \gets \text{InitializeSolver}(\mathcal{F}_{\mathrm{base}})$ \Comment{Instantiate persistent solver with base clauses}
\State $\widehat{C}_{\max} \gets H$
\State $bestSchedule \gets \emptyset$
\State $\mathcal{S}.\text{addClauses}(\mathcal{A}(\widehat{C}_{\max}))$
\While{$\mathcal{S}.\text{solve}() = \text{SAT}$} \Comment{Incremental check retaining learned clauses}
    \State $bestSchedule \gets \text{DecodeSchedule}(\mathcal{S}.\text{getModel}())$
    \State $C_{\max} \gets \text{ComputeMakespan}(bestSchedule)$
    \State $\widehat{C}_{\max} \gets C_{\max} - 1$
    \If{$\widehat{C}_{\max} < 0$}
        \State \textbf{break}
    \EndIf
    \State $\mathcal{S}.\text{addClauses}(\mathcal{A}(\widehat{C}_{\max}))$ \Comment{Append tighter bound to existing solver state}
\EndWhile
\If{$bestSchedule = \emptyset$}
    \State \Return infeasible
\Else
    \State \Return $bestSchedule$ \Comment{Incumbent is provably optimal}
\EndIf
\end{algorithmic}
\end{algorithm}

\subsection{Complexity Analysis}
\label{subsec:sat-complexity}

Let $p_{\max}=\max_{o\in\mathcal{O}}p_o$ and let
$w_{\max}=\max_{o\in\mathcal{O}}(W_o+W'_o)$. Since $[ES_o,LS_o]$ is the
feasible start-time domain for operation $o$, its size is
$D_o=|[ES_o,LS_o]|=LS_o-ES_o+1$. Thus, in the worst case, $D_o=O(H)$.

\begin{proposition}[Size of the SAT encoding]
\label{prop:base-encoding-complexity}
Under the worst-case domain size $D_o = O(H)$, the SAT encoding introduces
\begin{align}
    O\!\left(
        nmH\left(
            1 + \log^2(nm)\log w_{\max}
        \right)
    \right)
\end{align}
Boolean variables and
\begin{align}
    O\!\left(
        nmH\left(
            n + p_{\max}
            + \log^2(nm)\log w_{\max}
        \right)
    \right)
\end{align}
clauses.
\end{proposition}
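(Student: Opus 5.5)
The proof is a counting argument over the components of the base encoding, carried out in the order in which the encoding was built. Throughout, $|\mathcal{O}|=nm$ and $|\mathcal{T}|=H$, and I use $D_o=O(H)$ for every operation.

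\textbf{Start-time and phase variables.} Each operation has $O(D_o)$ order variables $X_{o,t}$ and exact-start variables $S_{o,t}$. It also has at most $H$ peak variables $P_{o,t}$ and $H$ nominal variables $B_{o,t}$, one per instant of $\mathcal{T}$. Summed over operations this gives $O(nmH)$ variables, which is the ``$1$'' term in the variable bound.

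\textbf{Direct clauses.} I bound each clause family of the base encoding separately.
\begin{itemize}
\item The boundary unit clauses contribute $O(nm)$ clauses.
\item The monotonicity and exact-start clauses contribute $O(1)$ clauses per pair $(o,t)$, hence $O(nmH)$.
\item The precedence clauses contribute one clause per $(o,t)$, hence $O(nmH)$.
\item Each machine processes exactly $n$ operations, one from each job. The ordered pairs $(o,o')$ on a common machine therefore number $m\,n(n-1)=O(n^2m)$. Each such pair generates one clause per $t\in[ES_o,LS_o]$, for a total of $O(n^2mH)=O(nmH\cdot n)$. This is the $n$ term in the clause bound.
\item Each operation has $O(H)$ start times. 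Each start time emits $a_o+(p_o-a_o)=p_o\le p_{\max}$ phase-activation clauses. Together with the rule that zero-length phases produce nothing, this gives $O(nmHp_{\max})$. This is the $p_{\max}$ term.
\end{itemize}

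\textbf{Power constraints.} For each $t\in\mathcal{T}$ the constraint is a single PB constraint. Its inputs number at most $|\mathcal{O}^{\mathsf{p}}_t|+|\mathcal{O}^{\mathsf{n}}_t|\le 2nm$, and its weights are bounded by $w_{\max}$. I would invoke the known size of the Binary Merger encoding of \citet{manthey2014more}. A PB constraint with $N$ inputs and maximum weight $w$ is encoded with $O(N\log^2 N\log w)$ auxiliary variables and clauses.

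The construction behind this bound is as follows. Each weight is decomposed into its $O(\log w)$ binary digits. For each bit level, the corresponding inputs are sorted with a sorting network of $O(N\log^2 N)$ comparators. The carries are then merged level by level, each merge costing no more than a sort. Substituting $N=O(nm)$ gives $O(nm\log^2(nm)\log w_{\max})$ per time instant. Over the $H$ instants this yields $O(nmH\log^2(nm)\log w_{\max})$, which appears in both the variable and the clause bound.

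\textbf{Main obstacle.} The delicate step is justifying this per-constraint bound for the Binary Merger encoding. Two points need care. First, the bound must be quoted in exactly the form $N\log^2N\log w$, rather than a tighter or looser variant that depends on the input sizes or uses simplified mergers. Second, it must be checked that the carry propagation between bit levels does not add more than a constant factor. If the cited paper states the bound only for the comparator count, I would argue that each comparator adds $O(1)$ variables and $O(1)$ clauses.

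\textbf{Conclusion.} The makespan assumptions $\mathcal{A}(\cdot)$ add only $n$ unit clauses per iteration and involve no new variables, so they are dominated. Summing all families then gives exactly the two stated bounds.
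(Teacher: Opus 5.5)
Your proposal is correct and is essentially the paper's proof. It uses the same family-by-family count: $O(nmH)$ base variables and clauses, $O(n^2mH)$ non-overlap clauses, $O(nmHp_{\max})$ phase-activation clauses, and the per-instant Binary Merger cost $O(nm\log^2(nm)\log w_{\max})$ of \citet{manthey2014more} multiplied by $H$. Your explicit count of the $mn(n-1)$ ordered machine pairs and your remark on the makespan unit clauses, which the paper defers to the next proposition, are harmless extras. On the step you flag as delicate, the paper offers nothing beyond the citation. Your sorting-network sketch describes the single-watchdog form, which does have size $O(N\log^2N\log w)$, whereas the variant that maintains generalized arc consistency is, as I recall it, usually quoted as $O(N^2\log^2N\log w)$, so the stated bound presupposes the former.
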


\begin{proof}
The encoding uses four types of base Boolean variables over the feasible
start-time domains, yielding $O(nmH)$ variables. The boundary,
monotonicity, exact-start linking, and precedence constraints in
\eqref{eq:sat-order-boundary}--\eqref{eq:sat-precedence}
together contribute $O(nmH)$ clauses.

The machine non-overlap constraints in \eqref{eq:sat-machine-no-overlap},
including their symmetric counterparts, contribute $O(n^2mH)$ clauses. The
phase-activation constraints in \eqref{eq:sat-phase-activation} contribute
$O(nmHp_{\max})$ clauses.

For each time instant $t$, the Binary Merger encoding of the power
constraint in \eqref{eq:sat-power-pb} contributes
$O(nm\log^2(nm)\log w_{\max})$ variables and clauses
\citep{manthey2014more}. Since the scheduling horizon contains $H$ time
instants, the total contribution of the power constraints is
$O(nmH\log^2(nm)\log w_{\max})$ variables and clauses.

Summing the contributions of all variable and constraint families gives the stated bounds.
\end{proof}

\begin{proposition}[Complexity of the incremental optimization loop]
\label{prop:optimization-loop-complexity}
The incremental optimization loop performs at most $O(H)$ SAT calls, adds
$O(nH)$ assumption clauses in total, and introduces no new variables.
\end{proposition}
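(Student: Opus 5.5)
The argument follows the structure of Algorithm~\ref{alg:incremental-sat} and tracks the bound $\widehat{C}_{\max}$ across iterations. We bound three things in turn: the number of solver calls, the number of clauses appended, and the absence of new variables.

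\emph{Number of calls.} Every call to $\mathcal{S}.\text{solve}()$ is made after adding $\mathcal{A}(\widehat{C}_{\max})$ for some integer $\widehat{C}_{\max}$, and the first such bound is $H$. After each satisfiable call, the decoded schedule satisfies the current constraints, so its makespan obeys $C_{\max}\le\widehat{C}_{\max}$. The next bound is therefore $C_{\max}-1\le\widehat{C}_{\max}-1$, so the sequence of bounds is strictly decreasing in the integers. The loop stops either when a call returns unsatisfiable or when the bound falls below $0$. Hence the bounds visited lie in $\{0,\ldots,H\}$, giving at most $H+1$ satisfiable calls plus one final unsatisfiable call, that is, $O(H)$ SAT calls. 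This count holds regardless of how far each satisfiable model jumps the bound down; larger jumps only shorten the loop.

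\emph{Number of added clauses.} By \eqref{eq:sat-makespan-assumption}, each $\mathcal{A}(\widehat{C}_{\max})$ is a conjunction of exactly one unit clause per job, so it contributes $n$ clauses. The call to addClauses happens once before the loop and at most once per iteration, so it is executed $O(H)$ times. This gives at most $n\cdot O(H)=O(nH)$ clauses in total.

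\emph{No new variables.} This is the step requiring the most care. Each literal $\neg X_{o_j^{\mathrm{last}},\widehat{C}_{\max}-p_j^{\mathrm{last}}+1}$ must refer to an order variable already present in $\mathcal{F}_{\mathrm{base}}$, or else be a fixed constant. If the time index lies in $\{ES_o,\ldots,LS_o+1\}$, the variable was created by \eqref{eq:sat-order-boundary}--\eqref{eq:sat-exact-start}. Otherwise we invoke the out-of-bound convention already adopted for \eqref{eq:sat-machine-no-overlap}: such an $X$ is treated as a constant. In that case the unit clause is either trivially satisfied or the empty clause, which yields immediate unsatisfiability. Either way, no fresh Boolean variable and no auxiliary PB or cardinality encoding is introduced. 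The persistent solver therefore keeps exactly the variable set bounded in Proposition~\ref{prop:base-encoding-complexity}.
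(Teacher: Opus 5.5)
Your proof is correct and follows the same route as the paper. A strictly decreasing integer threshold gives $O(H)$ calls, $n$ unit clauses per call give $O(nH)$ clauses, and the unit clauses only mention existing order variables; you are more careful than the paper in justifying the strict decrease via $C_{\max}\le\widehat{C}_{\max}$ and in handling out-of-range indices through the constant convention.
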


\begin{proof}
Each satisfiable iteration strictly decreases the integer threshold
$\widehat{C}_{\max}$ by at least 1, so there are at most $O(H)$ iterations.
Each iteration adds at most $n$ unit assumption clauses and no variables. Thus,
over all iterations, the incremental overhead is $O(nH)$ clauses and no
additional variables.
\end{proof}

\section{Computational Experiments}
\label{sec:computational-experiments}

\subsection{Benchmark Instances and Experimental Setup}

The benchmark instances used in this study were introduced by
\citet{kemmoe2017job} for evaluating exact and heuristic methods for the
Job Shop Scheduling Problem with Power Requirements. The instances used contain between four and ten jobs
and four machines. Their power-related data are generated randomly from
duration-dependent ranges. In particular, the basic and peak-power
parameters are generated using bounds between one-half and the full
duration of an operation, while the duration of the peak-consumption phase
is selected between zero and one-third of the operation duration. The
variable power threshold starts at 120\% of the minimum power required by
the complete system and is subsequently increased or decreased at selected
time points. The initial variation lies between $+25\%$ and $-25\%$;
after an increase, the following variation uses the range $+20\%$ to
$-30\%$. The probabilities of increasing and decreasing the threshold are
also adjusted to favor the larger absolute variation. A feasibility-preserving delay is applied before selecting the minimum required threshold.
These instances, identified as JSPPR instances, are publicly available in
the benchmark collection provided by the original authors\footnote{https://damienlamy.com/Works/Energy/JSPPR/VariableThreshold/}.

The computational experiments were conducted on a Google Cloud\footnote{https://cloud.google.com/}
\texttt{c4-highmem-4} virtual machine running Ubuntu 22.04.5 LTS.
The machine was equipped with an Intel Xeon Platinum 8581C processor at 2.30 GHz and 31 GB of RAM. The implementations were developed in Python 3.10.
The proposed SAT formulation was solved with CaDiCaL 1.9.5 \citep{BiereFazekasFleuryHeisinger-SAT-Competition-2020-solvers} through the PySAT library \citep{itk-sat24} version 1.9.dev4. The MILP formulation of
\citet{kemmoe2017job} was reimplemented and solved with CPLEX v22.2\footnote{https://www.ibm.com/products/ilog-cplex-optimization-studio}
and Gurobi v13.0.2\footnote{https://www.gurobi.com/}. The CP model was solved using CPLEX v22.2. The same preprocessing procedure described in Section~\ref{sec:preprocessing} was applied to all three formulations. A time limit of 3600 seconds was imposed on each instance for the SAT, MILP, and CP experiments. For comparison with existing results, we report the results of the exact CPLEX MILP and GRASP $\times$ ELS heuristic approaches as reported by \citet{kemmoe2017job}; these results were taken directly from the paper.

\subsection{Metrics}

The performance of the evaluated methods is assessed using the following metrics. Makespan is denoted by $C_{\max}$ and represents the completion time of the last operation in a schedule. Since the objective is makespan minimization, smaller values indicate better schedules.

The computational time, denoted by $t$ and reported in seconds, represents the total wall-clock time considered for a method on an instance, subject to the time limit specified in the experimental setup. For exact methods implemented by us, $t$ includes all computational steps, including reading the problem instance, computing the initial makespan upper bound using the fixed 5-second upper-bound generation procedure, performing domain reduction, and searching for and proving an optimal solution. Computational time is not considered for the CPLEX MILP and GRASP $\times$ ELS results reported by \citet{kemmoe2017job}, as these are historical results obtained under a different experimental setup.

The metric $\#\mathrm{OPTIMAL}$ denotes the number of benchmark instances for which a method proves optimality within the specified time limit. We calculate this metric exclusively for the methods implemented by us, as the results reported by \citet{kemmoe2017job} contain inconsistencies. The metric $\#\mathrm{BEST}$ denotes the number of instances for which a method achieves the best-known makespan among the compared methods. If multiple methods attain the same best makespan, each receives credit for that instance. Results from \citet{kemmoe2017job} identified as inconsistent in this work are excluded from the calculation.

\subsection{Results and Analysis}
Table~\ref{tab:aggregated-results} provides an aggregated overview of the computational performance and solution quality across all 35 JSPPR benchmark instances. Overall, the proposed SAT formulation and the CPLEX CP model show strong performance, with both achieving $\#\mathrm{OPTIMAL} = 35$ and attaining the best-known makespan on all benchmark instances ($\#\mathrm{BEST} = 35$). The GRASP $\times$ ELS metaheuristic also performs well, attaining the best makespan on 24 instances, although, as a metaheuristic, it does not provide certificates of optimality. The SAT and CPLEX CP formulations also obtain proven optimal solutions for instances where some of the mathematical programming approaches (our CPLEX MILP, Gurobi, and the historical CPLEX MILP) were unable to do so, suggesting that these formulations may be better suited to handling the combinatorial nature of the problem.

While SAT and CPLEX CP achieve the same solution quality and optimality coverage, their computational performance differs. CPLEX CP proves optimality in a cumulative time of 263.12 seconds, whereas the SAT formulation requires 2,672.55 seconds. Thus, CPLEX CP is approximately an order of magnitude faster on the benchmark set. Nevertheless, the overall runtime of the SAT formulation remains relatively modest compared with the MILP-based approaches, whose cumulative runtimes are substantially higher, partly due to timeouts on medium and large instances.

\begin{table}[htbp]
\centering
\caption{Comparison of solution quality and cumulative computational time across 35 JSPPR instances}
\label{tab:aggregated-results}
\begin{tabular}{@{}lrrr@{}}
\toprule
Method & $\#\mathrm{OPTIMAL}$ & $\#\mathrm{BEST}$ & Total time (s) \\
\midrule
SAT & \textbf{35} & \textbf{35} & 2,672.55 \\
CPLEX CP & \textbf{35} & \textbf{35} & \textbf{263.12} \\
CPLEX MILP (reimplemented) & 6 & 13 & 104,636.31 \\
Gurobi & 10 & 13 & 91,550.63 \\
CPLEX MILP \citep{kemmoe2017job} & -- & 15 & -- \\
GRASP $\times$ ELS & -- & 24 & -- \\
\bottomrule
\end{tabular}
\end{table}

In addition to their computational efficiency, the exact methods provide a significant advancement in solution quality over existing literature. Table~\ref{tab:new-optimal-makespans} summarizes these cases by comparing the best-known solutions reported by \citet{kemmoe2017job} with the optimal values verified in this study. These results suggest that both SAT and CPLEX CP can provide improvements over the solutions obtained by the GRASP $\times$ ELS metaheuristic on some of the more challenging instances. While metaheuristics are often effective at finding high-quality schedules within a relatively short time, they do not generally provide a guarantee of global optimality. In contrast, the exact methods considered here are able not only to identify better feasible schedules in these cases but also to verify their optimality.

The improvements also appear to become more noticeable as the problem size increases. For the 8-job and 9-job instances (JSPPR\_5\_8x4 and JSPPR\_4\_9x4), the differences are relatively small, with the exact methods improving the previously reported makespan by a single unit. For the larger 10-job instances, the improvements are more pronounced. In JSPPR\_2\_10x4, the optimal makespan is 6 units lower than the previously reported value, while for JSPPR\_4\_10x4, it is reduced by 23 units, from 680 to the verified optimum of 657. These results illustrate that exact approaches such as SAT and CPLEX CP can be useful for further refining solution quality and establishing reliable optimal benchmarks for the JSPPR.

\begin{table}[htbp]
\centering
\begin{minipage}{\linewidth}
\centering
\caption{Four new optimal makespan values proven by SAT and CPLEX CP against GRASP $\times$ ELS}
\label{tab:new-optimal-makespans}
\begin{tabular}{@{}lrrr@{}}
\toprule
Instance & \makecell[r]{Previous best-known \\ \citep{kemmoe2017job}} & New optimal ($C_{\max}^*$) & Improvement ($\Delta$)\footnote{Calculated as $\Delta = C_{\max}^* - \text{Previous best-known}$.} \\
\midrule
JSPPR\_5\_8x4  & 468 & \textbf{467} & $-1$ \\
JSPPR\_4\_9x4  & 639 & \textbf{638} & $-1$ \\
JSPPR\_2\_10x4 & 653 & \textbf{647} & $-6$ \\
JSPPR\_4\_10x4 & 680 & \textbf{657} & $-23$ \\
\end{tabular}
\end{minipage}
\end{table}

A notable outcome of the benchmark re-evaluation is the identification of several historical makespan values reported by \citet{kemmoe2017job} that are lower than the proven optimal values obtained in this study. Table~\ref{tab:inconsistent-results} summarizes these discrepancies across seven benchmark instances. Since both the proposed SAT formulation and the CPLEX CP model solve all instances to proven optimality and obtain identical makespan values, a reported solution with a makespan strictly smaller than $C_{\max}^*$ would not be consistent with the standard problem definition and may instead reflect an issue in the historical results or an instance parsing mismatch. As shown in Table~\ref{tab:inconsistent-results}, the third instance family (JSPPR\_3\_4x4 through JSPPR\_3\_9x4), together with JSPPR\_1\_9x4, contains several such discrepancies, with differences of up to 49 units relative to the verified optimal values. A more detailed technical analysis of the possible causes and structural characteristics of these discrepancies is provided in Appendix \ref{app:power-violations}.

These findings also provide additional context for interpreting the comparison with the GRASP $\times$ ELS metaheuristic. Although the historical results report lower makespans for GRASP $\times$ ELS on some of these instances, the discrepancies indicate that these values should be interpreted with caution when compared with the corrected and verified benchmark. Using the validated instances and schedules, the proposed CPLEX CP and SAT methods obtain makespans that match or improve upon the reliable heuristic results, while also providing certificates of optimality. Accordingly, the historically inconsistent values are excluded from the $\#\mathrm{BEST}$ metric so that the comparison is based on feasible and independently verified results.

\begin{table}[htbp]
\centering
\begin{minipage}{\linewidth}
\centering
\caption{Historical GRASP $\times$ ELS makespan values corrected by verified optimal values}
\label{tab:inconsistent-results}
\begin{tabular}{@{}lrrr@{}}
\toprule
Instance & \makecell[r]{Reported $C_{\max}$ \\ \citep{kemmoe2017job}} & Verified optimal ($C_{\max}^*$) & Discrepancy ($\Delta$)\footnote{Calculated as $\Delta = C_{\max}^* - \text{Reported } C_{\max}$.} \\
\midrule
JSPPR\_3\_4x4 & 298 & \textbf{343} & $+45$ \\
JSPPR\_3\_5x4 & 374 & \textbf{423} & $+49$ \\
JSPPR\_3\_6x4 & 459 & \textbf{479} & $+20$ \\
JSPPR\_3\_7x4 & 495 & \textbf{512} & $+17$ \\
JSPPR\_3\_8x4 & 527 & \textbf{528} & $+1$ \\
JSPPR\_1\_9x4 & 579 & \textbf{602} & $+23$ \\
JSPPR\_3\_9x4 & 558 & \textbf{581} & $+23$ \\
\bottomrule
\end{tabular}
\end{minipage}
\end{table}

\begin{sidewaystable}[htbp]
\renewcommand{\footnoterule}{}
\centering
\caption{Computational results for the JSPPR benchmark instances.}
\label{tab:computational-results}
\scriptsize
\begin{adjustbox}{width=\linewidth}
\begin{tabular}{@{}lrrrrrrrrrrrrrr@{}}
\toprule
\multicolumn{5}{c}{Instance data} & \multicolumn{8}{c}{Our implementation (TO = 3600s)} & \multicolumn{2}{c}{\citet{kemmoe2017job}} \\
\cmidrule(lr){1-5} \cmidrule(lr){6-13} \cmidrule(lr){14-15}
\multicolumn{5}{c}{} & \multicolumn{2}{c}{SAT} & \multicolumn{2}{c}{CPLEX CP} & \multicolumn{2}{c}{CPLEX MILP} & \multicolumn{2}{c}{Gurobi} & \multicolumn{1}{c}{CPLEX MILP} & \multicolumn{1}{c}{GRASP $\times$ ELS} \\
\cmidrule(lr){6-7} \cmidrule(lr){8-9} \cmidrule(lr){10-11} \cmidrule(lr){12-13} \cmidrule(lr){14-14} \cmidrule(lr){15-15}
Instance & Jobs & Machines & Operations & UB & $C_{\max}$ & $t$ (s) & $C_{\max}$ & $t$ (s) & $C_{\max}$ & $t$ (s) & $C_{\max}$ & $t$ (s) & $C_{\max}$ & $C_{\max}$ \\
\midrule
JSPPR\_1\_4x4 & 4 & 4 & 16 & 300 & \textbf{300\rlap{$^*$}} & 5.36 & \textbf{300\rlap{$^*$}} & 5.06 & \textbf{300\rlap{$^*$}} & 6.61 & \textbf{300\rlap{$^*$}} & 5.94 & \textbf{300\rlap{$^*$}} & \textbf{300} \\
JSPPR\_2\_4x4 & 4 & 4 & 16 & 475 & \textbf{475\rlap{$^*$}} & 6.01 & \textbf{475\rlap{$^*$}} & 5.12 & \textbf{475} & TO & \textbf{475\rlap{$^*$}} & 45.45 & \textbf{475\rlap{$^*$}} & \textbf{475} \\
JSPPR\_3\_4x4 & 4 & 4 & 16 & 352 & \textbf{343\rlap{$^*$}} & 5.52 & \textbf{343\rlap{$^*$}} & 5.09 & \textbf{343\rlap{$^*$}} & 32.35 & \textbf{343\rlap{$^*$}} & 11.33 & \underline{298} & \underline{298} \\
JSPPR\_4\_4x4 & 4 & 4 & 16 & 297 & \textbf{297\rlap{$^*$}} & 5.27 & \textbf{297\rlap{$^*$}} & 5.02 & \textbf{297\rlap{$^*$}} & 5.06 & \textbf{297\rlap{$^*$}} & 5.31 & \textbf{297\rlap{$^*$}} & \textbf{297} \\
JSPPR\_5\_4x4 & 4 & 4 & 16 & 318 & \textbf{318\rlap{$^*$}} & 5.31 & \textbf{318\rlap{$^*$}} & 5.02 & \textbf{318\rlap{$^*$}} & 5.07 & \textbf{318\rlap{$^*$}} & 5.07 & \textbf{318\rlap{$^*$}} & \textbf{318} \\
JSPPR\_1\_5x4 & 5 & 4 & 20 & 344 & \textbf{344\rlap{$^*$}} & 5.61 & \textbf{344\rlap{$^*$}} & 5.02 & \textbf{344\rlap{$^*$}} & 5.15 & \textbf{344\rlap{$^*$}} & 5.17 & \textbf{344\rlap{$^*$}} & \textbf{344} \\
JSPPR\_2\_5x4 & 5 & 4 & 20 & 459 & \textbf{447\rlap{$^*$}} & 6.94 & \textbf{447\rlap{$^*$}} & 5.12 & \textbf{447\rlap{$^*$}} & 182.07 & \textbf{447\rlap{$^*$}} & 16.65 & \textbf{447\rlap{$^*$}} & \textbf{447} \\
JSPPR\_3\_5x4 & 5 & 4 & 20 & 499 & \textbf{423\rlap{$^*$}} & 9.95 & \textbf{423\rlap{$^*$}} & 5.15 & 449 & TO & 462 & TO & \underline{374} & \underline{374} \\
JSPPR\_4\_5x4 & 5 & 4 & 20 & 362 & \textbf{332\rlap{$^*$}} & 5.84 & \textbf{332\rlap{$^*$}} & 5.16 & \textbf{332} & TO & \textbf{332\rlap{$^*$}} & 119.16 & \textbf{332} & \textbf{332} \\
JSPPR\_5\_5x4 & 5 & 4 & 20 & 350 & \textbf{350\rlap{$^*$}} & 5.61 & \textbf{350\rlap{$^*$}} & 5.03 & \textbf{350} & TO & \textbf{350} & TO & \textbf{350} & \textbf{350} \\
JSPPR\_1\_6x4 & 6 & 4 & 24 & 456 & \textbf{436\rlap{$^*$}} & 7.08 & \textbf{436\rlap{$^*$}} & 5.07 & \textbf{436} & TO & 452 & TO & \textbf{436\rlap{$^*$}} & \textbf{436} \\
JSPPR\_2\_6x4 & 6 & 4 & 24 & 583 & \textbf{521\rlap{$^*$}} & 21.46 & \textbf{521\rlap{$^*$}} & 5.95 & 532 & TO & 547 & TO & \textbf{521} & \textbf{521} \\
JSPPR\_3\_6x4 & 6 & 4 & 24 & 516 & \textbf{479\rlap{$^*$}} & 11.28 & \textbf{479\rlap{$^*$}} & 5.51 & 497 & TO & 485 & TO & \underline{459} & \underline{459} \\
JSPPR\_4\_6x4 & 6 & 4 & 24 & 431 & \textbf{418\rlap{$^*$}} & 6.60 & \textbf{418\rlap{$^*$}} & 5.69 & 431 & TO & 428 & TO & \textbf{418} & \textbf{418} \\
JSPPR\_5\_6x4 & 6 & 4 & 24 & 485 & \textbf{437\rlap{$^*$}} & 10.26 & \textbf{437\rlap{$^*$}} & 5.87 & 485 & TO & 473 & TO & 473 & \textbf{437} \\
JSPPR\_1\_7x4 & 7 & 4 & 28 & 541 & \textbf{456\rlap{$^*$}} & 18.56 & \textbf{456\rlap{$^*$}} & 5.32 & \textbf{456} & TO & \textbf{456\rlap{$^*$}} & 949.72 & \textbf{456\rlap{$^*$}} & \textbf{456} \\
JSPPR\_2\_7x4 & 7 & 4 & 28 & 545 & \textbf{478\rlap{$^*$}} & 15.94 & \textbf{478\rlap{$^*$}} & 5.16 & \textbf{478} & TO & \textbf{478\rlap{$^*$}} & 386.83 & \textbf{478\rlap{$^*$}} & \textbf{478} \\
JSPPR\_3\_7x4 & 7 & 4 & 28 & 594 & \textbf{512\rlap{$^*$}} & 22.07 & \textbf{512\rlap{$^*$}} & 5.77 & 528 & TO & 523 & TO & \underline{495} & \underline{495} \\
JSPPR\_4\_7x4 & 7 & 4 & 28 & 482 & \textbf{395\rlap{$^*$}} & 7.95 & \textbf{395\rlap{$^*$}} & 5.11 & \textbf{395} & TO & \textbf{395} & TO & \textbf{395} & \textbf{395} \\
JSPPR\_5\_7x4 & 7 & 4 & 28 & 490 & \textbf{429\rlap{$^*$}} & 15.08 & \textbf{429\rlap{$^*$}} & 5.11 & 453 & TO & \textbf{429} & TO & \textbf{429} & \textbf{429} \\
JSPPR\_1\_8x4 & 8 & 4 & 32 & 589 & \textbf{485\rlap{$^*$}} & 34.71 & \textbf{485\rlap{$^*$}} & 5.58 & 508 & TO & 491 & TO & 1754 & \textbf{485} \\
JSPPR\_2\_8x4 & 8 & 4 & 32 & 689 & \textbf{579\rlap{$^*$}} & 55.70 & \textbf{579\rlap{$^*$}} & 7.09 & 656 & TO & 644 & TO & 1881 & \textbf{579} \\
JSPPR\_3\_8x4 & 8 & 4 & 32 & 592 & \textbf{528\rlap{$^*$}} & 33.16 & \textbf{528\rlap{$^*$}} & 5.88 & 571 & TO & 552 & TO & \underline{527} & \underline{527} \\
JSPPR\_4\_8x4 & 8 & 4 & 32 & 613 & \textbf{487\rlap{$^*$}} & 41.56 & \textbf{487\rlap{$^*$}} & 7.79 & 588 & TO & 583 & TO & 1588 & \textbf{487} \\
JSPPR\_5\_8x4 & 8 & 4 & 32 & 608 & \textbf{467\rlap{$^*$}} & 87.01 & \textbf{467\rlap{$^*$}} & 7.21 & 514 & TO & 534 & TO & 696 & 468 \\
JSPPR\_1\_9x4 & 9 & 4 & 36 & 742 & \textbf{602\rlap{$^*$}} & 134.84 & \textbf{602\rlap{$^*$}} & 13.46 & 738 & TO & 708 & TO & 1201 & \underline{579} \\
JSPPR\_2\_9x4 & 9 & 4 & 36 & 745 & \textbf{643\rlap{$^*$}} & 173.49 & \textbf{643\rlap{$^*$}} & 16.60 & 740 & TO & - & TO & 2233 & \textbf{643} \\
JSPPR\_3\_9x4 & 9 & 4 & 36 & 703 & \textbf{581\rlap{$^*$}} & 82.44 & \textbf{581\rlap{$^*$}} & 8.40 & 699 & TO & 686 & TO & 2018 & \underline{558} \\
JSPPR\_4\_9x4 & 9 & 4 & 36 & 743 & \textbf{638\rlap{$^*$}} & 117.58 & \textbf{638\rlap{$^*$}} & 24.18 & 726 & TO & 733 & TO & 2136 & 639 \\
JSPPR\_5\_9x4 & 9 & 4 & 36 & 619 & \textbf{506\rlap{$^*$}} & 139.80 & \textbf{506\rlap{$^*$}} & 10.95 & 593 & TO & - & TO & 1868 & \textbf{506} \\
JSPPR\_1\_10x4 & 10 & 4 & 40 & 897 & \textbf{609\rlap{$^*$}} & 264.72 & \textbf{609\rlap{$^*$}} & 9.66 & 821 & TO & 809 & TO & 2458 & \textbf{609} \\
JSPPR\_2\_10x4 & 10 & 4 & 40 & 945 & \textbf{647\rlap{$^*$}} & 266.32 & \textbf{647\rlap{$^*$}} & 7.32 & 833 & TO & 808 & TO & 2717 & 653 \\
JSPPR\_3\_10x4 & 10 & 4 & 40 & 711 & \textbf{588\rlap{$^*$}} & 91.39 & \textbf{588\rlap{$^*$}} & 5.29 & 711 & TO & - & TO & 2289 & \textbf{588} \\
JSPPR\_4\_10x4 & 10 & 4 & 40 & 1995 & \textbf{657\rlap{$^*$}} & 871.49 & \textbf{657\rlap{$^*$}} & 23.29 & 1963 & TO & 1876 & TO & - & 680 \\
JSPPR\_5\_10x4 & 10 & 4 & 40 & 665 & \textbf{593\rlap{$^*$}} & 80.64 & \textbf{593\rlap{$^*$}} & 5.07 & 617 & TO & 637 & TO & 2442 & \textbf{593} \\
\bottomrule
\end{tabular}
\end{adjustbox}
\footnotetext{Bold values with an asterisk (\textbf{$^*$}) indicate proven optimal values; bold values indicate the best-known values across methods; underlined values indicate values reported in the original study that are inconsistent with the corrected benchmark and excluded from the $\#\mathrm{BEST}$ calculation; ``TO'' indicates that the time limit was reached; and ``-'' indicates that no makespan was found within the imposed time limit.}
\end{sidewaystable}

Table~\ref{tab:computational-results} provides a detailed view of instance-level runtime and scaling behavior across the benchmark. For instances with 4--6 jobs, both CPLEX CP and SAT prove optimality within seconds. As the number of jobs increases to 10 (40 operations), CPLEX CP remains relatively fast, with proof times below 25 seconds, whereas SAT requires more time, reaching 871.49 seconds on JSPPR\_4\_10x4. Nevertheless, SAT proves optimality for all instances within the 3,600-second limit. In contrast, both our CPLEX MILP reimplementation and Gurobi reach the time limit on several instances without proving optimality, and Gurobi fails to find a feasible solution for three of the largest instances. The historical CPLEX MILP results show similar difficulties. Overall, these results indicate that CP constraint propagation and SAT clause learning are effective for exploiting the precedence and disjunctive resource constraints of the JSPPR. GRASP $\times$ ELS obtains optimal solutions on several smaller instances but does not provide optimality certificates and reports higher makespans on some larger instances.

\section{Conclusion}
\label{sec:conclusion}

This paper addresses the Job Shop Scheduling Problem with Power Requirements (JSPPR), a variant of the classical job shop scheduling problem in which operations must additionally satisfy an instantaneous power-consumption limit. Prior work on the JSPPR was limited to a MILP formulation and a GRASP $\times$ ELS metaheuristic. To provide additional effective solution methods for the problem, we develop two exact approaches based on SAT solving and constraint programming. The proposed SAT solving approach uses an order encoding for operation start times and a PB encoding for the instantaneous power constraints, while the CP approach reformulates the existing MILP model using interval variables and global constraints. A common preprocessing procedure is also introduced to improve the effectiveness of both approaches. The computational study demonstrates that these formulations provide effective exact alternatives for the JSPPR and establish independently verified optimal solutions for the considered benchmark. 

The computational evaluation on the 35 JSPPR benchmark instances showed that both SAT and CP prove optimality for all instances and obtain identical optimal makespans, while CP consistently requires less computational effort than SAT. Both exact approaches also demonstrate stronger optimality-solving performance than MILP. The re-evaluation further identified inconsistencies in several previously reported GRASP $\times$ ELS makespans, which were therefore excluded from the comparison of best-known solutions. The independently verified optimal results provide a reliable reference for future studies on the JSPPR benchmark.

The currently available JSPPR benchmark instances are relatively small, and the present results do not yet establish whether exact SAT and CP approaches can remain competitive with metaheuristic methods on larger problems. The independently verified SAT and CP results obtained in this study provide reliable exact baselines for future evaluations on larger and more diverse JSPPR instances, enabling a more comprehensive assessment of their scalability and relative effectiveness. Further research may also investigate stronger SAT encodings, hybrid SAT--CP techniques, and extensions of the JSPPR with additional practical constraints, such as multiple power thresholds, machine eligibility, and sequence-dependent setup times.

\section*{CRediT authorship contribution statement}
\textit{Huy Tuan Nguyen}: Methodology; Software; Validation; Investigation; Data curation; Visualization; Writing - review \& editing.
\textit{Duc Trung Kim Nguyen}: Conceptualization; Methodology; Formal analysis; Visualization;
Writing – original draft; Writing - review \& editing.
\textit{Khanh To Van}: Investigation; Conceptualization; Methodology; Formal analysis; Supervision; Project administration; Writing – review \& editing

\section*{Funding details}
The authors received no specific grant from any funding agency in the public, commercial or not-for-profit sectors for this research.

\section*{Disclosure statement}
The authors report there are no competing interests to declare.

\section*{Declaration on the use of generative AI}
The authors used OpenAI ChatGPT (GPT-5.6) for language refinement and consistency checks during manuscript preparation. The authors reviewed and verified all
outputs and take full responsibility for the accuracy, originality, citations, analyses,
and conclusions.

\appendix
\section{Benchmark Validation and Inconsistency Analysis}
\subsection{Invalid Machine References in the Benchmark Instances}
\label{app:invalid-machine-references}

All benchmark instances in the dataset are defined with four machines. The machine identifiers used in the input files are zero-based, so the four machines are represented by the indices $0$, $1$, $2$, and $3$. Consequently, each machine-dependent data block is expected to contain exactly four machine-value pairs, one for each of these machine identifiers. However, a direct inspection of the benchmark files revealed that 10 instances contain an additional reference to machine $4$, which is not among the four machines defined by the dataset.

For example, in \texttt{JSPPR\_3\_4x4}, the header specifies four machines, whereas the first line of the nominal-power block is
\begin{equation*}
1\ 9\ 2\ 6\ 0\ 11\ 4\ 16\ 3\ 16.
\end{equation*}
This line contains the five machine-value pairs
\begin{equation*}
(1,9),\ (2,6),\ (0,11),\ (4,16),\ (3,16).
\end{equation*}
The pair $(4,16)$ therefore refers to a fifth machine that is not defined in the instance. Thus, the block contains five machine-value pairs although the dataset specifies only four machines. The same inconsistency was identified in the 10 instances listed in Table~\ref{tab:invalid-machine-references}.

In reproducing the benchmark results, machine-value pairs referring to undeclared machines were ignored when parsing the affected blocks. This leaves four valid machine-value pairs corresponding to the four machines defined by the instance and allows the remaining data to be reconstructed consistently. This parsing treatment is used only to reproduce the benchmark instances and does not change the fact that the original input contains an invalid machine reference.

\begin{table}[htbp]
\centering
\caption{Benchmark instances containing invalid machine references}
\label{tab:invalid-machine-references}
\begin{tabular}{@{}lll@{}}
\toprule
Instance & Invalid machine reference & Affected line \\
\midrule
JSPPR\_1\_8x4  & Machine $4$  & 2 13 0 24 1 6 3 7 \underline{4} 35 \\
JSPPR\_1\_9x4  & Machine $4$  & 3 3 1 18 \underline{4} 12 0 6 2 24 \\
JSPPR\_1\_10x4 & Machine $4$  & 3 3 1 18 \underline{4} 12 0 6 2 24 \\
JSPPR\_3\_4x4 & Machine $4$  & 1 9 2 6 0 11 \underline{4} 16 3 16  \\
JSPPR\_3\_5x4 & Machine $4$  & 1 9 2 6 0 11 \underline{4} 16 3 16  \\
JSPPR\_3\_6x4 & Machine $4$  & 1 9 2 6 0 11 \underline{4} 16 3 16  \\
JSPPR\_3\_7x4 & Machine $4$  & 1 9 2 6 0 11 \underline{4} 16 3 16  \\
JSPPR\_3\_8x4 & Machine $4$  & 1 9 2 6 0 11 \underline{4} 16 3 16  \\
JSPPR\_3\_9x4 & Machine $4$  & 1 9 2 6 0 11 \underline{4} 16 3 16  \\
JSPPR\_3\_10x4 & Machine $4$  & 1 9 2 6 0 11 \underline{4} 16 3 16  \\
\bottomrule
\end{tabular}
\end{table}

\subsection{Infeasible Reported Schedules Due to Power Constraint Violations}
\label{app:power-violations}
We verify the feasibility of the schedules reported by \citet{kemmoe2017job} independently of the optimization method used to obtain them. For each reported schedule, the published operation start times and machine assignments are taken as given, and the total power consumption is recomputed over the scheduling horizon. At each time point $t$, the power consumption is obtained by summing the power requirements of all operations active at $t$, according to their reported machine assignments and execution phases. The reported schedule is considered feasible with respect to the power constraint only if $P(t) \leq PT(t)$ holds for every time point $t$, where $P(t)$ denotes the total power consumption and $PT(t)$ denotes the corresponding time-dependent power threshold.

\begin{table}[htbp]
\centering
\caption{Schedules reported by \citet{kemmoe2017job} that violate the power constraint}
\label{tab:infeasible-schedules}
\begin{tabular}{@{}lrrrr@{}}
\toprule
Instance & Reported $C_{\max}$ & Violation time $t$ & $P(t)$ & $PT(t)^*$ \\
\midrule
JSPPR\_1\_9x4  & 579 & 258 &  99 &  84 \\
JSPPR\_1\_10x4 & 609 & 317 & 112 &  94 \\
JSPPR\_3\_4x4  & 298 &  40 &  59 &  50 \\
JSPPR\_3\_5x4  & 374 & 227 &  76 &  53 \\
JSPPR\_3\_6x4  & 459 &  79 & 133 & 112 \\
JSPPR\_3\_7x4  & 495 & 407 &  79 &  62 \\
JSPPR\_3\_8x4  & 527 & 325 & 157 & 110 \\
JSPPR\_3\_9x4  & 558 & 324 & 122 &  60 \\
JSPPR\_3\_10x4 & 588 & 405 & 118 &  62 \\
\bottomrule
\multicolumn{5}{@{}l}{\footnotesize $^*$The values of $P(t)$ and $PT(t)$ are evaluated at the violation time $t$ reported in the third column.}
\end{tabular}
\end{table}

This verification identified infeasible reported schedules in nine benchmark instances. In every affected instance, the recomputed power consumption exceeds the allowed threshold at one or more time points. Among these nine instances, seven have reported makespan values that are strictly smaller than the optimal values established in this study. The results are summarized in Table~\ref{tab:infeasible-schedules}.

Representative examples of the power violations can be illustrated by plotting the power consumption $P(t)$ together with the corresponding threshold $PT(t)$. The complete set of verification results and figures for all nine affected instances is provided through the link given in the Data Availability statement. Readers interested in the verification of individual instances can therefore access the corresponding detailed evidence and reproduce the checks independently.

\bibliographystyle{plainnat}
\bibliography{references}

\end{document}